\newtheorem{thm}{Theorem}
\newtheorem{lem}[thm]{Lemma}
\newtheorem{cor}[thm]{Corollary}
\newtheorem{defn}[thm]{Definition}
\newtheorem{prop}[thm]{Proposition}
\newtheorem{rem}[thm]{Remark}
\renewenvironment{proof}{{\noindent\bf Proof.}}{ \hfill ~\qed}
\def\qed{\rule[0pt]{5pt}{5pt}\par\medskip}
\newcommand{\neset}{\Omega}
\newcommand{\R}{\mathbb{R}}
\newcommand{\Rp}{\mathcal{R}_p}
\newcommand{\RHinf}{\mathcal{RH}_\infty}
\newcommand{\Htwo}{\mathcal{H}_2}
\DeclareMathOperator{\diag}{\mathbf{diag}}
\newcommand{\T}{\rule{0pt}{2.3ex}}
\newcommand{\hlinet}{\hline\T}
\newcommand{\stsp}[4]{\left[\begin{array}{c|c}
      #1 & #2 \\ \hlinet
      #3 & #4
    \end{array}\right]}
\newcommand{\bmat}[1]{\begin{bmatrix}#1\end{bmatrix}}
\begin{document}
\title{On Structured Realizability and \\ Stabilizability of Linear Systems}
\author{Laurent Lessard \and Maxim Kristalny \and Anders Rantzer}
\note{}
\maketitle

\begin{abstract}

We study the notion of \emph{structured realizability} for linear systems defined over graphs. A stabilizable and detectable realization is \emph{structured} if the state-space matrices inherit the sparsity pattern of the adjacency matrix of the associated graph. In this paper, we demonstrate that not every structured transfer matrix has a structured realization and we reveal the practical meaning of this fact.
We also uncover a close connection between the structured realizability of a plant and whether the plant can be stabilized by a structured controller. In particular, we show that a structured stabilizing  controller can only exist when the plant admits a structured realization. Finally, we give a parameterization of all structured stabilizing controllers and show that they always have structured realizations.
\end{abstract}

\section{Introduction}\label{sec:intro}

Linear time-invariant systems are typically represented using transfer functions or state-space realizations. The relationship between these representations is well understood; every transfer function has a minimal state-space realization and one can easily move between representations depending on the need.

In this paper, we address the question of whether this relationship between representations still holds for systems defined over directed graphs. Consider the simple two-node graph of Figure~\ref{fig:ex0}.
For $i=1,2$, the node $i$ represents a subsystem with inputs $u_i$ and outputs $y_i$, and the edge means that subsystem~1 can influence subsystem~2 but not vice versa. The transfer matrix for any such a system has the sparsity pattern $\mathcal{S}_1$.

For a state-space realization to make sense in this context, every state should be associated with a subsystem, which means that the state should be computable using the information available to that subsystem. In the case of $\mathcal{S}_1$, the states that determine $y_1$ must only depend on $u_1$, while the states that determine $y_2$ may depend on both $u_1$ and $u_2$. Consider the following example, which belongs to the graph constraint of Figure~\ref{fig:ex0}.

\begin{figure}[h]
\begin{equation}\label{eq:ex0}
G_1 = \bmat{ \frac{1}{s+1} & 0 \\ \T \frac{1}{s+1} & \frac{1}{s+2} }
= \left[\begin{array}{cc|cc}
-1 & 0 & 1 & 0 \\
0 & -2 & 0 & 1 \\ \hlinet
1 & 0 & 0 & 0  \\
1 & 1 & 0 & 0
\end{array}\right]
\end{equation}
\centering
\begin{minipage}{0.3\linewidth}
\begin{center}
\begin{tikzpicture}[thick,>=latex]
\tikzstyle{block}=[circle,draw,minimum height=2.1em]
\node [block](N1){$1$};
\path (N1) +(1.6,0) node [block](N2){$2$};
\draw [->] (N1) -- (N2);
\end{tikzpicture}
\end{center}
\end{minipage}
\begin{minipage}{0.3\linewidth}
\begin{center}
$\mathcal{S}_1 = \bmat{1&0\\1&1}$
\end{center}
\end{minipage}
\caption{A two-node graph and its adjacency matrix.\label{fig:ex0}}
\end{figure}
%
More generally, we say a system is $\mathcal{S}$\emph{-realizable} if it has a stabilizable and detectable realization for which $A,B,C,D$ each have the sparsity of $\mathcal{S}$. A more formal definition is given in Section~\ref{sec:realizability}.

While this definition seems natural and one might expect every structured transfer function to have a corresponding $\mathcal{S}$-realization, it is not the case in general. For example, $G_2$ defined in~\eqref{eq:ex1} belongs to the graph constraint of Figure~\ref{fig:ex1}, but no $\mathcal{S}_2$-realization exists. A proof is given in Appendix~\ref{sec:appendix}. This example shows that there is no immediate relation between the sparsity of a transfer matrix and that of its state-space matrices. 
\begin{figure}[h]
\begin{equation}\label{eq:ex1}
G_2 = \bmat{ 0 & 0 & 0 & 0 \\ 0 & 0 & 0 & 0 \\
\tfrac{1}{s-1} & \tfrac{1}{s-1} & 0 & 0 \\
\tfrac{1}{s-1} & \tfrac{1}{s-1} & 0 & 0 }
\end{equation}
\centering
\begin{minipage}{0.3\linewidth}
\begin{center}
\begin{tikzpicture}[thick,>=latex]
\tikzstyle{block}=[circle,draw,minimum height=2.1em]
\node [block](N3){$3$};
\path (N3) +(-1.3,-.6) node [block](N1){$1$};
\path (N3) +(1.3,-.6) node [block](N2){$2$};
\path (N3) +(0,-1.2) node [block](N4){$4$};
\draw [->] (N1) -- (N3);
\draw [->] (N1) -- (N4);
\draw [->] (N2) -- (N3);
\draw [->] (N2) -- (N4);
\end{tikzpicture}
\end{center}
\end{minipage}
\begin{minipage}{0.3\linewidth}
\begin{center}
$\mathcal{S}_2 = \bmat{1&0&0&0\\0&1&0&0\\1&1&1&0\\1&1&0&1}$
\end{center}
\end{minipage}
\caption{A four-node graph and its adjacency matrix.\label{fig:ex1}}
\end{figure}

Our main result is that $\mathcal{S}$-realizability is necessary for $\mathcal{S}$-stabilizability; finding a stabilizing controller that has the same $\mathcal{S}$-structure as the plant. Non-realizable systems exist, but such systems cannot stabilize or be stabilized by other structured systems.

The paper is organized as follows. In Section~\ref{sec:literature}, we cover some related work in the literature and in Section~\ref{sec:preliminaries} we cover basic definitions and concepts touching on systems over graphs, realizability, and stabilizability. Our main results are in Sections~\ref{sec:simple} and \ref{sec:main}, which are followed by concluding remarks in Section~\ref{sec:discussion}.

\section{Literature Review}\label{sec:literature}

There is a large body of work exploring control systems defined over graphs. For a broad class of systems, synthesizing an optimal controller can be reduced to solving a convex optimization problem \cite{lessard_iqi_09,qimurti04,rotkowitz2010convexity}.

Several solution approaches have been reported. LMI methods \cite{dulldand_hetero,rantzer06,scherer02,zelazo09} work directly with state-space realizations for the plant and controller. Vectorization~\cite{rotkowitz_2006a} avoids the sparsity constraint by reshaping the transfer matrix. Alternatively, one can solve a sequence of finite-dimensional convex problems whose solutions converge to the optimal structured controllers~\cite{qimurti04}.

Surprisingly, the issue of structured realizability is not addressed in any of the aforementioned works. Indeed, the LMI methods assume that structured realizations for the plant and controller always exist, while transfer function methods make no mention of state-space --- so it's conceivable that a transfer function method might generate a controller which has no structured realization! 

In this paper, we consider directed graphs with delay-free links. Our framework is similar to that of Swigart and Lall~\cite{swigart_graph}. For many such problems, explicit state-space solutions have been directly computed \cite{jonghan_separable,lessard_bcof,lessard_allerton,shah10,swigart_thesis}. In these works, a structured realization is assumed for the plant, and the optimal controller turns out to have a structured realization as well. The topic of whether such realizations should exist in general, or whether suboptimal structured stabilizing controllers should always have structured realizations is not discussed.

To the best of our knowledge, the only existing work dealing with structured realizability is the work of Vamsi and Elia \cite{elia_cdc,elia_acc}. In these papers, the authors provide sufficient conditions for structured realizability, as well as an LMI approach for controller synthesis that guarantees realizability. However, they give no example of a provably non-realizable system, and the sufficient conditions provided are potentially very restrictive.

In Section~\ref{sec:intro}, we showed that structured realizability is a meaningful concept by giving an example of a non-realizable transfer function. In the sections that follow, we provide some very general results; a parameterization of all structured stabilizing controllers, and a proof that $\mathcal{S}$-realizability is necessary for $\mathcal{S}$-stabilizability. 

\section{Preliminaries}\label{sec:preliminaries}

\subsection{Directed Graphs}

A directed graph is a set of nodes $\mathcal{V}=\{1,\dots,N\}$ and edges $\mathcal{E} \subseteq \mathcal{V}\times\mathcal{V}$. If $(i,j)\in \mathcal{E}$, we say that there is an edge from $i$ to $j$, and we write $i\to j$. We make several assumptions regarding the graph.
\begin{enumerate}[{\it A}1)]
\item Self-loops: for all $i\in \mathcal{V}$, $i\to i$.
\item Transitive closure: if $i \to j$ and $j \to k$, then $i \to k$.
\item There are no directed cycles of length 2 or greater.
\end{enumerate}
These assumptions are natural, and will be further justified in Section~\ref{sec:sysgraphs}. Given a graph satisfying the assumptions above, define the adjacency matrix $\mathcal{S}\in \{0,1\}^{N\times N}$
\[
\mathcal{S}_{ij} = \begin{cases} 1 & j \to i \\ 0 & \text{otherwise} \end{cases}
\]
Under assumptions {\it A}1--{\it A}3, one can always relabel the nodes such that $\mathcal{S}$ is lower-triangular with a full diagonal. Therefore, we will assume a lower-triangular $\mathcal{S}$ from now on. See Figure~\ref{fig:ex1} for an example of a graph and its associated adjacency matrix.

\subsection{Systems over Graphs}\label{sec:sysgraphs}

The systems considered in this paper are linear, time-invariant, continuous-time, and rational. We denote the set of proper rational transfer functions as $\Rp$. If all the poles of  $G\in\Rp$ are contained in the open left-half plane, $G$ is \emph{stable}, and we write $G\in \RHinf$.

For systems defined over graphs, additional notation is needed to keep track of the input and output partitions.

\begin{defn}[Index sets]
Suppose we have a graph $\mathcal{S}\in\{0,1\}^{N\times N}$. An \emph{index set} $k$ is a tuple $(k_1,\dots,k_N)$ of nonnegative integers. We also define the set of nonempty indices as $\neset_k = \{i \in \mathcal{V}\mid k_i \ne 0\}$.
\end{defn}

\begin{defn}
Suppose we have a graph $\mathcal{S}\in\{0,1\}^{N\times N}$ with associated index sets $k$ and $m$, and $F$ is a commutative ring. We write $A \in \mathcal{S}(F,k,m)$ to mean that
\[
A = \bmat{ A_{11} & \cdots & A_{1N} \\
                 \vdots & \ddots & \vdots \\
                 A_{N1} & \cdots & A_{NN}}
\]
If $j\to i$, then $A_{ij} \in F^{k_i\times m_j}$. Otherwise, $A_{ij} = 0$. If $k$ and $m$ are clear by context, we simply write $A \in \mathcal{S}(F)$.
\end{defn}

A plant $G$ defined over the graph $(\mathcal{V,E})$ is written in its most general form as $G \in \mathcal{S}(\Rp,k,m)$. Intuitively, this means that if $i\to j$, then input $u_i$ affects output $y_j$, and the associated transfer function is $G_{ij}$. If there is no edge from $i$ to $j$, then $G_{ij}=0$.

We seek controllers that obey the same structure. Namely, if $i\to j$ then the control signal $u_j$ may depend on the measurement $y_i$, and the associated transfer function is $K_{ij}$. If there is no edge from $i$ to $j$ then $K_{ij}=0$.

We can now see why Assumptions~{\it A}2--{\it A}3 make sense. The graph $(\mathcal{V,E})$ represents information flow; if a subcontroller $i$ shares what it knows with subcontroller $j$ along the link $i\to j$ and similarly for $j \to k$, one would expect subcontroller $k$ to have access to the information from subcontroller $i$ as well. Directed cycles can also be removed by treating all nodes involved as a single node.

\begin{rem}\label{rem:algebra_property}
Assumption~{\it A}2 leads to a useful algebraic property.
If $G_1 \in \mathcal{S}(\Rp,k,m)$ and $G_2 \in \mathcal{S}(\Rp,m,p)$ then
$G_1G_2 \in \mathcal{S}(\Rp,k,p)$. In other words, structured transfer functions as defined above form an \emph{algebra}.
\end{rem}

\subsection{Structured Realizability}\label{sec:realizability}

Structured realizability is a core concept in this paper. Roughly speaking, we are interested in finding a state-space realization for a structured plant comprised of matrices that are also structured.

\begin{defn}[Structured realizability]
A transfer function $G\in\mathcal{S}(\Rp,k,m)$ is said to be $\mathcal{S}(k,m)$--\emph{realizable} if there exists an index set $n$ and matrices $A\in \mathcal{S}(\R,n,n)$, $B\in \mathcal{S}(\R,n,m)$, $C\in\mathcal{S}(\R,k,n)$, and $D\in\mathcal{S}(\R,k,m)$ such that $(A,B,C,D)$ is a stabilizable and detectable realization for $G$. When $m$, and $k$ are clear by context, we will simply write that $G$ is $\mathcal{S}$-realizable.
\end{defn}

Note that we allow the index set $n$ to have zero-entries. If a component of $n$ is zero, it means that no state is associated with that subsystem, and the corresponding rows and columns of $A$ collapse. 

A key aspect of this definition is the requirement of stabilizability and detectability. Indeed, we will see in Section~\ref{sec:stable} that one can always construct a trivial realization with the correct structure by introducing duplicate (non-minimal) states.

\subsection{Stabilizability}

Throughout this paper, we use the conventional notion of internal stability for feedback interconnections \cite{dullerud,zdg}. If a plant $G$ and controller $K$ are connected in feedback as in Figure~\ref{fig:feedback_stab}, then $K$ is stabilizing if and only if the map $(u_1,u_1)\mapsto (y_1,y_2)$ is proper and stable. When dealing with decentralized systems, there are other natural ways to define stability. For a further discussion on this topic, see Section~\ref{sec:discussion}. We now state the formal input-output definition of stabilization.
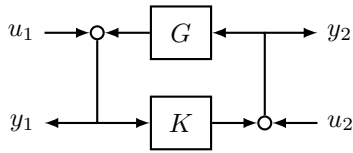
\begin{figure}[h]
\centering
\begin{tikzpicture}[thick,auto,>=latex,node distance=1.2cm]
\tikzstyle{block}=[draw,rectangle,inner sep=2mm,minimum width=0.8cm,minimum height=0.7cm]
\tikzstyle{plus}=[draw,circle,inner sep=0.4ex]
\node [block](G){$G$};
\node [block,below of=G](K){$K$};
\draw [->] (G.west) -- +(-0.6,0) node[plus,anchor=east](p1){};
\draw [->] (G.west)+(-1.4,0) node [anchor=east](u1){$u_1$} -- (p1);
\draw [->] (p1 |- K) -- (u1.east |- K) node [anchor=east](y1){$y_1$};
\draw [->] (p1) |- (K);
\path (G.east)+(0.6,0) node (tmp){};
\draw [->] (K) -- (tmp |- K) node[plus,anchor=west](p2){};
\draw [<-] (G.east) -| (p2);
\path (G.east)+(1.4,0) node [anchor=west](y2){$y_2$};
\draw [<-] (p2) -- (y2.west |- K) node [anchor=west](u2){$u_2$};
\draw [<-] (y2) -- (u1 -| p2);
\end{tikzpicture}
\caption{Feedback loop with inputs and outputs added to the feedback path.\label{fig:feedback_stab}}
\end{figure}
\begin{defn}[Stabilization]\label{def:stab_io}
Suppose $G \in \Rp^{k\times m}$ and $K\in\Rp^{m\times k}$. We say that $K$ \emph{stabilizes} $G$ if
\begin{enumerate}[(i)]
\item $I-G(\infty)K(\infty)$ is invertible, and
\item $\bmat{I & -G \\ -K & I}^{-1} \in \RHinf$
\end{enumerate}
\end{defn}
There is also a useful state-space characterization of stabilization, which we state as a proposition.
\begin{prop}\label{prop:stab_ss}
Suppose $G \in \Rp^{k\times m}$, $K\in\Rp^{m\times k}$ have
realizations given by $(A,B,C,D)$ and $(A_K,B_K,C_K,D_K)$ respectively. Then the following are equivalent.
\begin{enumerate}[(i)]
\item $(C,A,B)$ and $(C_K,A_K,B_K)$ are stabilizable and detectable, and $K$ stabilizes $G$.
\item  $(I-DD_K)$ is invertible, and $\bar A$ is Hurwitz, where
\end{enumerate}
\[
\bar A = 
\bmat{A & 0 \\ 0 & A_K}+
      \bmat{B & 0 \\ 0 & B_K}
      \bmat{I & -D_K \\ -D & I}^{-1}
      \bmat{0 & C_K \\ C & 0}.
\]
\end{prop}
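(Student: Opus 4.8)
The plan is to prove Proposition~\ref{prop:stab_ss} by constructing an explicit state-space realization of the closed-loop map in Figure~\ref{fig:feedback_stab} and relating its properties to the conditions in Definition~\ref{def:stab_io}. First I would form the interconnection: writing $x,x_K$ for the plant and controller states, the feedback equations are $y_1 = Cx + Du_2'$, $y_2 = C_Kx_K + D_Ku_1'$, where $u_2' = u_2 + y_2$ and $u_1' = u_1 + y_1$ are the true inputs to $G$ and $K$ after the summing junctions. Solving the static loop for $(u_1',u_2')$ in terms of $(u_1,u_2)$ and the states requires exactly that $\bmat{I & -D_K \\ -D & I}$ be invertible; a Schur-complement argument shows this is equivalent to condition (i) of Definition~\ref{def:stab_io} (invertibility of $I - DD_K$, equivalently of $I - D_KD$). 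Substituting back into $\dot x = Ax + Bu_2'$, $\dot x_K = A_Kx_K + B_Ku_1'$ yields precisely the matrix $\bar A$ in the statement as the closed-loop $A$-matrix, together with some closed-loop $\bar B,\bar C,\bar D$ that I would write down but not dwell on.

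The core of the argument is then the equivalence (i)$\iff$(ii). For (i)$\Rightarrow$(ii): if each of $(C,A,B)$ and $(C_K,A_K,B_K)$ is stabilizable and detectable, then $(\bar C,\bar A,\bar B)$ is a stabilizable and detectable realization of the closed-loop transfer matrix $\bmat{I & -G \\ -K & I}^{-1}$ — this uses the standard fact that feedback interconnection of minimal-enough realizations preserves stabilizability/detectability (the PBH test carried through the invertible loop transformation). Since $K$ stabilizes $G$, that closed-loop transfer matrix is in $\RHinf$, and a stabilizable-and-detectable realization of a stable transfer matrix must have Hurwitz $A$-matrix, i.e. $\bar A$ is Hurwitz. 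Conversely, for (ii)$\Rightarrow$(i): if $\bar A$ is Hurwitz then the closed-loop transfer matrix is stable, so $K$ stabilizes $G$ in the input-output sense; and Hurwitz-ness of $\bar A$ forces each subsystem to be stabilizable and detectable, since any uncontrollable-unstable or unobservable-unstable mode of $(C,A,B)$ or $(C_K,A_K,B_K)$ would survive as an eigenvalue of $\bar A$ in the closed right-half plane (again via PBH applied to the block structure of $\bar A$).

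The main obstacle I anticipate is the careful bookkeeping in the PBH arguments that link the subsystem stabilizability/detectability to properties of $\bar A$: one must verify that the similarity/feedback transformation embedding $(A,B,C,D)$ and $(A_K,B_K,C_K,D_K)$ into $(\bar A,\bar B,\bar C,\bar D)$ does not create or destroy hidden unstable modes. Concretely, I would check that $\bmat{\lambda I - \bar A & \bar B}$ loses rank at some $\lambda$ with $\operatorname{Re}\lambda \ge 0$ if and only if one of the subsystem pairs fails stabilizability there, and dually for detectability with $\bmat{\lambda I - \bar A \\ \bar C}$. The rest — the Schur-complement identity for the $D$-matrices and the assembly of $\bar B,\bar C,\bar D$ — is routine linear algebra that I would state without full computation. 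This proposition is essentially the classical internal-stability theorem specialized to the two-port augmentation of Figure~\ref{fig:feedback_stab}, so I would also cite \cite{dullerud,zdg} where the underlying facts are standard.
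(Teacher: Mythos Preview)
Your proposal is correct and follows essentially the same route as the paper. The paper dispatches (i)$\Rightarrow$(ii) by citing \cite{dullerud,zdg} as standard, and for (ii)$\Rightarrow$(i) observes that $\bar A$ Hurwitz means $\bigl(\diag(A,A_K),\diag(B,B_K)\bigr)$ is stabilizable (since $\bar A$ is this $A$-matrix plus a $B$-times-something term), then invokes PBH to extract stabilizability of each factor---which is exactly your ``hidden unstable mode would survive'' PBH argument, phrased from the other direction.
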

\begin{proof} Proposition~\ref{prop:stab_ss} is a standard result~\cite{dullerud,zdg}, though it is typically stated with the assumption that $(C,A,B)$ is stabilizable and detectable. To prove the converse, note that if $\bar A$ is Hurwitz, then
\[
\left( \bmat{A & 0 \\ 0 & A_K},\bmat{B & 0 \\ 0 & B_K} \right)
\qquad\text{is stabilizable.}
\]
It follows from the PBH test that $(A,B)$ is stabilizable as well. A similar argument holds for detectability.
\end{proof}

In this paper, we seek controllers that are both stabilizing and have a particular structure. Therefore, we introduce new terminology to indicate this more restricted notion of stabilization.

\begin{defn}
Suppose $G \in \mathcal{S}(\Rp,k,m)$. We say that $G$ is $\mathcal{S}$-stabilizable if there exists $K \in \mathcal{S}(\Rp,m,k)$ such that $K$ stabilizes $G$.
\end{defn}

Note that these definitions are symmetric; $K$ stabilizes $G$ if and only if $G$ stabilizes $K$. The same symmetry relationship holds for $\mathcal{S}$-stabilization.

In the absence of structural constraints, it is well-known that every $G\in\Rp$ can be stabilized. The main thrust of this paper is to explain what happens when $G$ is structured, and we seek a controller that is $\mathcal{S}$-stabilizing. We will see in Sections~\ref{sec:simple}--\ref{sec:main} that not every $G\in\mathcal{S}$ is $\mathcal{S}$-stabilizable, and $\mathcal{S}$-realizability plays an important role.

\section{Simple Cases}\label{sec:simple}

For certain systems, finding a structured realization is straightforward. In this section, we explore two such cases that will be useful later: stable systems, and linear chain structures.

\subsection{Stable Systems}\label{sec:stable}

If the plant is stable, it can always be $\mathcal{S}$-realized, regardless of the underlying graph. By duplicating states in a way that guarantees the correct structure, the resulting realization is always stabilizable and detectable since the starting plant was stable. We give the construction in the following lemma. A similar result appeared in~\cite{elia_acc}.

\begin{lem}\label{lem:realize_stable}
Suppose $G \in \mathcal{S}(\RHinf,k,m)$. Then $G$ is $\mathcal{S}$-realizable.
\end{lem}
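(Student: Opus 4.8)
The idea: take each nonzero block $G_{ij}$ (with $j \to i$) individually, give it its own minimal realization, and assign the states of that realization to node $j$ (the source). Then collect everything into a block-structured state-space model and verify the sparsity patterns work out. Since $G$ is stable, every $G_{ij}$ is stable, so each individual realization can be chosen stable (hence its $A$-matrix Hurwitz), and the aggregate $A$-matrix will be block-diagonal and Hurwitz — which immediately gives stabilizability and detectability.

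Let me lay out the construction concretely. I would do this.

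\begin{itemize}
\item[] For each pair $(i,j)$ with $j \to i$, write a minimal stable realization $G_{ij} = \stsp{A^{ij}}{B^{ij}}{C^{ij}}{D^{ij}}$, where $A^{ij}$ is Hurwitz (possible since $G_{ij}\in\RHinf$). Let $n^{ij}$ be its state dimension.
\item[] Group the states by the source node $j$: node $j$ gets state dimension $n_j = \sum_{i\,:\,j\to i} n^{ij}$. This defines the index set $n = (n_1,\dots,n_N)$.
\item[] Build $A = \diag(\dots)$ block-diagonal, one diagonal block per node $j$, itself block-diagonal in the $A^{ij}$. Because it is literally block-diagonal, $A \in \mathcal{S}(\R,n,n)$ trivially (the diagonal of $\mathcal{S}$ is full by Assumption~{\it A}1).
\item[] Build $B$ so that its $(j,j)$ block stacks the $B^{ij}$ vertically (over all $i$ with $j\to i$) and all off-diagonal blocks are zero; again $B\in\mathcal{S}(\R,n,m)$ trivially.
\item[] Build $C$ so that the $(i,j)$ block is $\bmat{0 & \cdots & C^{ij} & \cdots & 0}$ (picking out the $A^{ij}$-states inside node $j$'s state group), and $0$ when $j\not\to i$. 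Then $C\in\mathcal{S}(\R,k,n)$ by construction.
\item[] Set $D_{ij} = D^{ij}$ when $j\to i$ and $0$ otherwise, so $D\in\mathcal{S}(\R,k,m)$.
\end{itemize}

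A direct block computation then shows $C(sI-A)^{-1}B + D$ has $(i,j)$ block equal to $C^{ij}(sI-A^{ij})^{-1}B^{ij} + D^{ij} = G_{ij}$ when $j\to i$, and $0$ otherwise, so $(A,B,C,D)$ realizes $G$. Since $A$ is block-diagonal with every block Hurwitz, $A$ itself is Hurwitz, so $(A,B)$ is stabilizable and $(C,A)$ is detectable vacuously. That completes the proof.

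I do not expect a genuine obstacle here — the construction is essentially bookkeeping. The one point requiring a little care is the placement of states so that the sparsity pattern of $C$ comes out right: the states realizing $G_{ij}$ must be owned by the \emph{source} node $j$, not the sink node $i$, because $C$ must have the sparsity of $\mathcal{S}$ (output $i$ may only read states owned by nodes $j$ with $j\to i$) while $B$ must also have the sparsity of $\mathcal{S}$ (state group $j$ may only be driven by input $j$, using the full diagonal of $\mathcal{S}$). Assigning states to the source makes $B$ block-diagonal and $C$ inherit exactly the pattern of $\mathcal{S}$; the self-loop assumption {\it A}1 is what makes the block-diagonal $A$ and $B$ admissible. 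The resulting realization is highly non-minimal (each $G_{ij}$ gets its own independent copy of any shared dynamics), which is exactly the "duplicate states" phenomenon flagged in Section~\ref{sec:realizability}.
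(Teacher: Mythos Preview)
Your proof is correct and follows essentially the same approach as the paper: assign all states for column $j$ to the source node $j$, obtaining a block-diagonal $A$ and $B$ and a $C$ with the sparsity of $\mathcal{S}$, then invoke stability to get stabilizability and detectability for free. The only minor difference is granularity: the paper realizes each full block-column $G_j$ with a single minimal realization, whereas you realize each block $G_{ij}$ separately and stack them within node~$j$, which yields a (possibly) larger but equally valid structured realization.
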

\begin{proof}
We may construct a realization for $G$ as follows. Separate $G$ into its block-columns $G_i$ for $i=1,\dots,N$. Find minimal realizations
\[
G_i = \stsp{A_i}{B_i}{C_i}{D_i}.
\]
In general, $A_i$ and $B_i$ will be full, but $C_i$ and $D_i$ will have zero-rows corresponding to the zero-rows in $G_i$. Now stack the columns side-by-side and obtain a joint realization.
\begin{equation}\label{eq:joint_real}
\bmat{\stsp{A_1}{B_1}{C_1}{D_1} &\cdots& \stsp{A_N}{B_N}{C_N}{D_N}} = 
\left[\begin{array}{ccc|ccc}
A_1 & & & B_1 & & \\
& \ddots & &  & \ddots & \\
& & A_N & & & B_N \\ \hlinet
C_1 & \cdots & C_N & D_1 & \cdots & D_N
\end{array}\right]
\end{equation}
The realization \eqref{eq:joint_real} has the desired structure because $A$ and $B$ are block-diagonal, and $C$ and $D$ are in $\mathcal{S}(\R)$. Finally, the realization is stabilizable and detectable since we started with a stable system and each $A_i$ is stable.
\end{proof}

In the case where $G$ is not stable, the construction method used in the proof of  Lemma~\ref{lem:realize_stable} still produces a realization with the correct structure, but the realization may fail to be stabilizable and detectable. This is a consequence of the fact that realizing the block-columns separately and then re-combining them may cause some unstable poles to get duplicated. Indeed, this is precisely what happens if we attempt to realize \eqref{eq:ex1}. 

\subsection{Chain Structures}\label{sec:triangular}

Linear chain structures correspond to adjacency matrices whose lower-triangular part is full. It turns out such systems are always $\mathcal{S}$-realizable.

\begin{lem}\label{lem:realize_triangular}
Suppose $G\in\mathcal{S}(\Rp,k,m)$ and $\mathcal{S}$ has the full lower-triangular sparsity pattern
\[
\mathcal{S} = \bmat{1 & & 0 \\ \vdots & \ddots & \\ 1 & \cdots & 1}
\]
Then $G$ is $\mathcal{S}$-realizable.
\end{lem}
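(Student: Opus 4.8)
The goal is to show that any $G \in \mathcal{S}(\Rp, k, m)$ with $\mathcal{S}$ full lower-triangular admits a stabilizable and detectable realization whose matrices $A, B, C, D$ are all in $\mathcal{S}(\R)$. My plan is to start from an arbitrary minimal (hence stabilizable and detectable) realization $(A_0, B_0, C_0, D_0)$ of $G$ — which need not be structured — and then exhibit a \emph{structured} realization by a direct construction that processes the graph one node at a time, exploiting the triangular structure so that each new block of states only needs to "see" inputs $u_1, \dots, u_i$ and only feeds outputs $y_i, \dots, y_N$.

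\textbf{Construction.} Write $G$ in block form with block rows/columns indexed by nodes $1, \dots, N$. Because $\mathcal{S}$ is full lower-triangular, $G_{ij} = 0$ exactly when $j > i$. The idea is to build the realization column-block by column-block, as in Lemma~\ref{lem:realize_stable}, but being careful about which states couple to which outputs. For each $i$, consider the $i$-th block column $G_{\cdot i} = \bmat{0 & \cdots & 0 & G_{ii}^\tp & \cdots & G_{Ni}^\tp}^\tp$, i.e. the transfer function from $u_i$ to the stacked outputs $(y_i, \dots, y_N)$. Take a minimal realization $(\tilde A_i, \tilde B_i, \tilde C_i, \tilde D_i)$ of this tall transfer function; assign all of its states to node $i$. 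Then $\tilde B_i$ has columns only in the $u_i$ slot (so the $B$-block sits on the diagonal, which is allowed), and $\tilde C_i$ has nonzero rows only in output slots $i, i+1, \dots, N$ — precisely the slots $j$ with $i \to j$, so it respects $\mathcal{S}$. Stacking these $N$ pieces block-diagonally in $A$ and $B$, and side-by-side in $C$ and $D$, gives matrices $A \in \mathcal{S}(\R, n, n)$ (block diagonal, hence trivially structured), $B \in \mathcal{S}(\R, n, m)$, $C \in \mathcal{S}(\R, k, n)$, $D \in \mathcal{S}(\R, k, m)$, and together they realize $G$.

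\textbf{Stabilizability and detectability.} This is the step I expect to be the real obstacle, and it is where the full lower-triangular hypothesis must do genuine work — the naive column-wise construction above is exactly the one that fails for \eqref{eq:ex1}, because shared unstable dynamics get duplicated across columns. Here the point is that in the chain case no such duplication is forced: one should instead realize the columns in a nested fashion. Concretely, I would realize $G$ by a \emph{cascade} down the chain — first pick a minimal realization of the first block row (node $1$ sees only $u_1$), then at node $2$ realize the new dynamics in $\bmat{G_{21} & G_{22}}$ not already captured, and so on, so that the total state dimension equals $\deg G$ and the realization stays minimal. Alternatively, and perhaps more cleanly, take \emph{any} minimal realization $(A_0, B_0, C_0, D_0)$ of $G$ and show it can be brought into $\mathcal{S}(\R)$ form by a state-space similarity transform that block-triangularizes $A_0$ consistently with the controllability/observability structure induced by the triangular sparsity of $G$; minimality is preserved by similarity, so stabilizability and detectability come for free. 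The crux in either route is a careful bookkeeping argument that the triangular structure of $G$ forces the reachable subspace from $u_1, \dots, u_i$ and the unobservable subspace from $y_i, \dots, y_N$ to nest compatibly, which is what lets us assign states to nodes without ever duplicating a mode. I would present the cascade construction and verify minimality by a dimension count (showing the constructed realization has order exactly $\deg G$), which is the least error-prone way to nail down the stabilizable-and-detectable requirement.
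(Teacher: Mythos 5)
Your proposal correctly identifies the difficulty (the naive column-by-column construction duplicates unstable modes, as in the counterexample) and names the right strategy --- indeed your second alternative, ``take any minimal realization and block-triangularize it by a similarity transform,'' is exactly the route the paper takes, with minimality preserved under similarity giving stabilizability and detectability for free. But the proposal stops at the point where the actual work begins: you assert that the triangular sparsity of $G$ forces the relevant reachable and unobservable subspaces to ``nest compatibly,'' and you defer the verification to ``a careful bookkeeping argument'' or ``a dimension count'' that you never carry out. That nesting claim \emph{is} the lemma; without a proof of it, nothing has been established. (Your primary proposed route, the cascade that adds only ``new dynamics not already captured'' at each node, is even less pinned down --- ``not already captured'' is undefined, and it is not obvious a priori that the cascade terminates at order $\deg G$.)

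The concrete mechanism you are missing is the following. In the $N=2$ case, a minimal realization $(A,B,C,D)$ of $G$ with $G_{12}=0$ gives a realization $(A,B_2,C_1,0)$ of the zero transfer function, so $C_1A^jB_2=0$ for all $j\ge 0$; hence the reachable subspace of $(A,B_2)$ is contained in the unobservable subspace of $(C_1,A)$, i.e.\ the Kalman decomposition of $(C_1,A,B_2)$ has no controllable-and-observable block. Choosing coordinates adapted to the resulting chain of $A$-invariant subspaces (ordered $\bar{c}o$, $\bar{c}\bar{o}$, $c\bar{o}$) makes $A$ block-lower-triangular, puts the nonzero rows of $B_2$ in the last block and the nonzero columns of $C_1$ in the first, which is precisely the structured form; recursing on the trailing block handles $N>2$. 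This is the ``nesting'' you gestured at, made precise --- your write-up needs this step spelled out (and note your pairing is stated backwards: what nests is the subspace reachable from the \emph{later} inputs inside the subspace unobservable from the \emph{earlier} outputs).
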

\begin{proof}
Let $(A,B,C,D)$ be a minimal realization of $G$. We will sequentially construct a state transformation matrix $T$ such that $(T^{-1}AT, T^{-1}B, CT, D)$ has the desired sparsity pattern. Consider the simplest case, $N=2$. Partition the realization according to the block-structure
\begin{equation}\label{eq:st2}
\bmat{G_{11} & 0 \\ G_{21} & G_{22} }
= \left[\begin{array}{c|cc}
A & B_1 & B_2 \\ \hlinet
C_1 & D_{11} & 0 \\
C_2 & D_{21} & D_{22} \end{array}\right].
\end{equation}
Note that $D$ must already have the correct sparsity pattern.
A realization for the zero-block $G_{12}$ is
\begin{equation}\label{eq:st1}
0 = \left[\begin{array}{c|cc}
A & B_2 \\ \hlinet
C_1 & 0 \end{array}\right].
\end{equation}
Let $T_1$ be the transformation matrix that puts \eqref{eq:st1} into Kalman canonical form. There are typically four blocks in such a decomposition, but there will only be three in this case since the system we are realizing is identically zero and thus there can be no modes that are both controllable and observable. Applying $T_1$ to~\eqref{eq:st2}, we obtain
\[
G = \left[\begin{array}{ccc|cc}
A_{\bar c o} & 0 & 0 & B_{11} & 0\\
A_{21} & A_{\bar c \bar o} & 0 & B_{21} & 0 \\
A_{31} & A_{32} & A_{c\bar o} & B_{31} & B_{c \bar o} \\ \hlinet
C_{\bar c o} & 0 & 0 & D_{11} & 0\\
C_{21} & C_{22} & C_{23} & D_{21} & D_{22}
\end{array}\right]
\]
This realization is block-lower-triangular, and we notice that
the index sets $n$ may not be unique. For example, the modes $A_{\bar c \bar o}$ can be part of either the $A_{11}$ block or the $A_{22}$
block.

For the case where $N>2$, put the first block of the diagonal into $G_{11}$ and lump the rest of the block-lower-triangular structure into $G_{22}$. Upon applying the $T_1$ found from the $N=2$ case, we are left with
\[
G_{22} = \stsp{A_{c\bar o}}{B_{c\bar o}}{C_{23}}{D_{22}}.
\]
Apply this approach recursively by finding a $T_2$ that puts the first zero-block-row of $G_{22}$ into Kalman canonical form. Continuing in this manner eventually yields a realization for $G$ that has the desired sparsity pattern. Furthermore, this realization is stabilizable and detectable since it is minimal.
\end{proof}

\section{Main Results}\label{sec:main}

Our main results draw the connection between $\mathcal{S}$-stabilizability and $\mathcal{S}$-realizability. We will show that $\mathcal{S}$-stabilizable plants are always $\mathcal{S}$-realizable, and $\mathcal{S}$-stabilizing controllers are always $\mathcal{S}$-realizable. In other words, if structured systems connected in feedback results in a an internally stable interconnection, then both systems must be $\mathcal{S}$-realizable.

The main result has two essential ingredients. In Section~\ref{sec:param}, we assume the plant is $\mathcal{S}$-realizable. We then find a Youla-like parameterization of all $\mathcal{S}$-stabilizing controllers and show that all $\mathcal{S}$-stabilizing controllers are $\mathcal{S}$-realizable. In Section~\ref{sec:stabilizable}, we assume instead that our plant is $\mathcal{S}$-stabilizable, but make no assumptions regarding $\mathcal{S}$-realizability. We then show that there must exist a $\mathcal{S}$-stabilizing controller that is also $\mathcal{S}$-realizable. These results are combined into one concise statement, which we give in Corollary~\ref{cor:main}.

\subsection{Parameterization of Stabilizing Controllers}\label{sec:param} 

In this subsection, we assume the plant is $\mathcal{S}$-realizable, and we give a characterization of all structured stabilizing controllers. This parameterization is similar to the classical work of Youla \cite{youla}, and the unstructured version is one of the pillars of classical $\Htwo/\mathcal{H}_\infty$ theory~\cite{zdg}. In the framework adopted herein, the sparsity constraint imposed on the controller amounts to the same sparsity constraint being imposed on the Youla parameter~$Q$.

Similar parameterizations have appeared in the literature. Most notably, the work of Qi et.al. \cite{qimurti04} treat the lower-triangular case and other cases of interest. A more general parameterization is provided in \cite{sabau10}, where the authors consider general quadratically invariant systems in which the plant and controller may have different sparsity patterns.

\begin{thm}\label{thm:conditions}
Suppose $G \in \mathcal{S}(\Rp,k,m)$ is $\mathcal{S}$-realizable, with a structured realization given by $(A,B,C,D)$.
\begin{enumerate}[(i)]
\item $G$ is $\mathcal{S}$-stabilizable if and only if $(C_{ii},A_{ii},B_{ii})$ is stabilizable and detectable for all $i\in\neset_n$.
\item In this case, for $i\in\neset_n$, choose $F_i$ and $L_i$ such that $A_{ii}+B_{ii}F_i$ and $A_{ii}+L_iC_{ii}$ are Hurwitz. Define $F_d = \diag\{F_i\}_{i\in \neset_n}$ and $L_d = \diag\{L_i\}_{i\in \neset_n}$. A particular $\mathcal{S}$-stabilizing controller is given by
\[
K_0 = \stsp{A+BF_d+L_dC+L_dDF_d}{-L_d}{F_d}{0}
\]
\end{enumerate}
\end{thm}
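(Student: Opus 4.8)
The plan is to reduce the structured statement to the classical (unstructured) observer-based controller construction, then verify that everything respects the sparsity pattern $\mathcal{S}$. For part (i), one direction is easy: if $G$ is $\mathcal{S}$-stabilizable by some $K\in\mathcal{S}(\Rp,m,k)$, then Proposition~\ref{prop:stab_ss} applied to the given structured realization $(A,B,C,D)$ of $G$ and any realization of $K$ forces $(C,A,B)$ to be stabilizable and detectable; because $A$, $B$, $C$ are block-lower-triangular with the same block partition, the PBH test applied block-by-block along the diagonal shows each diagonal triple $(C_{ii},A_{ii},B_{ii})$ must itself be stabilizable and detectable. (Here I would use that the uncontrollable/unobservable modes of the triangular system restricted to block $i$ are exactly the uncontrollable/unobservable modes of $(C_{ii},A_{ii},B_{ii})$, since the off-diagonal coupling is one-directional.) The converse follows from part (ii), so the real content is constructing $K_0$.

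For part (ii), given $F_i$, $L_i$ stabilizing the diagonal blocks, I would first check the structure of $K_0$. The matrices $F_d$ and $L_d$ are block-diagonal, hence trivially in $\mathcal{S}(\R)$; since $A,B,C,D\in\mathcal{S}(\R)$ and $\mathcal{S}(\R)$ is closed under products and sums (Remark~\ref{rem:algebra_property}, or simply because lower-triangularity is preserved under matrix operations with a fixed block partition), the matrix $A+BF_d+L_dC+L_dDF_d$ lies in $\mathcal{S}(\R)$, as do $-L_d$ and $F_d$. So $K_0\in\mathcal{S}(\Rp,m,k)$ with state index set $n$. Next I would verify $K_0$ stabilizes $G$: this is exactly the standard observer-based output-feedback controller $\hat x' = A\hat x + Bu + L_d(C\hat x + Du - y)$, $u = F_d\hat x$, whose closed-loop $\bar A$ from Proposition~\ref{prop:stab_ss} is similar (via the coordinate change $(x,\hat x)\mapsto(x, x-\hat x)$) to a block-triangular matrix with diagonal blocks $A+BF_d$ and $A+L_dC$. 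The feedthrough condition is immediate since $D_{K_0}=0$ gives $I-DD_{K_0}=I$, invertible.

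The main obstacle is showing that $A+BF_d$ and $A+L_dC$ are Hurwitz from the hypothesis that only the \emph{diagonal} blocks $A_{ii}+B_{ii}F_i$ and $A_{ii}+L_iC_{ii}$ are Hurwitz. This works precisely because of the triangular structure: $A+BF_d$ is block-lower-triangular with diagonal blocks $A_{ii}+B_{ii}F_i$ (the off-diagonal blocks of $BF_d$ contribute only below the diagonal, and $F_d$ being block-diagonal means column $i$ of $BF_d$ is $B_{\cdot i}F_i$, which does not touch diagonal block $j<i$... wait, I need to be careful: $(BF_d)_{ij} = B_{ij}F_j$, and for $j<i$ this is a below-diagonal entry, for $j=i$ it is $B_{ii}F_i$, for $j>i$ it is $0$ since $B_{ij}=0$) — so the eigenvalues of $A+BF_d$ are exactly the union of the eigenvalues of the $A_{ii}+B_{ii}F_i$, all in the open left-half plane. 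The dual argument handles $A+L_dC$. I would state this triangular-eigenvalue fact as the key lemma-step and then invoke Proposition~\ref{prop:stab_ss} to conclude. A minor subtlety to address: blocks $i\notin\neset_n$ have $n_i=0$ and simply drop out of all sums and of the definitions of $F_d,L_d$, so the construction is consistent; I would remark on this explicitly. Finally, the existence of such $F_i$, $L_i$ under the stabilizability/detectability hypothesis of (i) is classical, closing the loop and proving the "if" direction of (i).
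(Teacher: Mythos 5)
Your part (ii) and the ``if'' direction of part (i) are correct and match the paper's argument: block-lower-triangularity of $A+BF_d$ and $A+L_dC$ with Hurwitz diagonal blocks $A_{ii}+B_{ii}F_i$ and $A_{ii}+L_iC_{ii}$, followed by the standard observer-based similarity transformation. The gap is in your ``only if'' direction. You claim that because $(A,B,C)$ are block-lower-triangular, stabilizability and detectability of the full triple $(C,A,B)$ passes down to each diagonal triple $(C_{ii},A_{ii},B_{ii})$ via a block-by-block PBH test. This implication is false: an unstabilizable mode of $(A_{ii},B_{ii})$ can still be controllable in the full system through the off-diagonal entries $B_{ij}$ and $A_{ij}$ with $j\to i$, so the parenthetical assertion that the uncontrollable modes of block $i$ are exactly those of $(A_{ii},B_{ii})$ does not hold (it holds for block $1$, but not for later blocks, where one-directional coupling feeds \emph{into} the block). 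The paper itself exhibits a counterexample at the end of Section~\ref{sec:main}: the plant $G = \bmat{\frac{1}{s+1} & 0 \\ \frac{1}{s-1} & \frac{1}{s+1}}$ has a \emph{minimal} structured realization, so $(C,A,B)$ is stabilizable and detectable, yet no assignment of the unstable mode to a diagonal block yields stabilizable and detectable diagonal triples, and the plant is not $\mathcal{S}$-stabilizable.

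The correct necessity argument must use the structure of the \emph{controller}, not just of the plant. The paper's route: since $K\in\mathcal{S}(\Rp,m,k)$ is in particular block-lower-triangular, Lemma~\ref{lem:realize_triangular} gives a stabilizable and detectable realization $(A_K,B_K,C_K,D_K)$ with that triangular structure. Proposition~\ref{prop:stab_ss} then makes the closed-loop matrix $\bar A$ Hurwitz; permuting its rows and columns to interleave plant and controller sub-blocks produces a block-lower-triangular matrix whose $i$-th diagonal block is exactly the closed-loop matrix of the pair $\bigl((A_{ii},B_{ii},C_{ii},D_{ii}),(A_{Kii},B_{Kii},C_{Kii},D_{Kii})\bigr)$. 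Each such block is Hurwitz, and applying Proposition~\ref{prop:stab_ss} in the reverse direction to each diagonal subsystem yields stabilizability and detectability of $(C_{ii},A_{ii},B_{ii})$. Without this use of the triangular realization of $K$ and the interleaving permutation, the necessity claim is unproved.
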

\begin{proof}
Suppose $(C_{ii},A_{ii},B_{ii})$ is stabilizable and detectable for all $i\in \neset$. By definition, there must exist $F_i$ and $L_i$ such that $A_{ii}+B_{ii}F_i$ and $A_{ii}+L_iC_{ii}$ are Hurwitz. Define $F_d = \diag\{F_i\}_{i\in \neset_n}$ and $L_d = \diag\{L_i\}_{i\in \neset_n}$. Since $A$ is block-lower-triangular, $A+BF_d$ and $A+L_dC$ are also Hurwitz. It is straightforward to check that $K_0$ defined in the theorem statement above is a stabilizing controller with the correct structure.

Conversely, suppose $G$ is stabilizable by some $K \in \mathcal{S}(\Rp,m,k)$. Let $\bar{\mathcal{S}}$ be the adjacency matrix with a full lower-triangular sparsity 
pattern. Then $K \in \bar{\mathcal{S}}(\Rp,m,k)$. By Lemma~\ref{lem:realize_triangular}, $K$ is $\bar{\mathcal{S}}$-realizable. Let 
$(A_K,B_K,C_K,D_K)$ be a stabilizable and detectable realization for $K$ structured according to $\bar{\mathcal{S}}$. By Proposition~\ref{prop:stab_ss}, $(I-DD_K)$ must be invertible, and $\bar A$ must be Hurwitz. Rearrange the columns and rows of $\bar A$ by taking the first sub-blocks of each large block, then the second sub-blocks, and so on. The resulting matrix is block-lower-triangular, and has the same eigenvalues as $\bar A$. The main diagonal blocks are
\[
\bmat{A_{ii} & 0 \\ 0 & A_{Kii}}+
      \bmat{B_{ii} & 0 \\ 0 & B_{Kii}}
      \bmat{I & \!\!-D_{Kii} \\ -D_{ii} & \!\!I}^{-1} 
      \bmat{0 & C_{Kii} \\ C_{ii} & 0}\quad\text{for }i\in\neset_n
\]
Applying Proposition~\ref{prop:stab_ss} once more, we conclude that $(C_{ii},A_{ii},B_{ii})$ is stabilizable and detectable.
\end{proof}

Theorem~\ref{thm:conditions} establishes a necessary and sufficient condition for the existence of a $\mathcal{S}$-stabilizing controllers for for $\mathcal{S}$-realizable plants. Our next result is a parameterization of all $\mathcal{S}$-stabilizing controllers.

\begin{thm}\label{thm:param}
Suppose $G \in \mathcal{S}(\Rp,k,m)$ is $\mathcal{S}$-realizable, with a structured realization given by $(A,B,C,D)$. Let
    \begin{equation*}\label{eq:J}
      J = 
      \left[\begin{array}{c|cc}
          A+BF_d+L_dC + L_d DF_d& -L_d & B+L_d D \\ \hlinet
          F_d & 0 & I \\
          -(C+DF_d) & I & -D
        \end{array}\right]
    \end{equation*}
where $F_d$ and $L_d$ are defined as in Theorem~\ref{thm:conditions}. Also, define \\
$F_l(J,Q) = J_{11}+J_{12}Q(I-J_{22}Q)^{-1}J_{21}$.
\begin{enumerate}[(i)]
\item Every $K\in\mathcal{S}(\Rp,m,k)$ that stabilizes $G$ is parameterized by $K = F_l(J,Q)$, where $Q \in \mathcal{S}(\RHinf,m,k)$ such that $I+DQ(\infty)$ is nonsingular.
\item Every stabilizing controller is $\mathcal{S}$-realizable.
\end{enumerate}
\end{thm}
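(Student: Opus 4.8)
The plan is to reduce both statements to the classical (unstructured) Youla parameterization, keeping track of sparsity by means of the algebraic closure of $\mathcal{S}$. The matrix $J$ is precisely the standard observer-based parameterizing system built from the realization $(A,B,C,D)$ and the gains $F_d,L_d$ (cf.~\cite{zdg,youla}). Since $A+BF_d$ and $A+L_dC$ are Hurwitz --- shown in the proof of Theorem~\ref{thm:conditions} using that $A$ is block-lower-triangular --- the unstructured theory already gives that $K=F_l(J,Q)$ is a bijection between all controllers stabilizing $G$ and $\{Q\in\RHinf:I+DQ(\infty)\text{ nonsingular}\}$ (note $I+DQ(\infty)=I-J_{22}(\infty)Q(\infty)$). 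It remains only to track sparsity. Reading the feedthrough matrix $\bmat{0&I\\I&-D}$ off the realization of $J$, and using that the state-space data of $J$ are sums and products of matrices in $\mathcal{S}(\R)$ (the matrix analogue of Remark~\ref{rem:algebra_property}, valid because $\mathcal{S}$ is lower-triangular with full diagonal), each block $J_{ij}$ lies in $\mathcal{S}(\Rp)$, with $J_{11}(\infty)=0$ and $J_{12}(\infty)=J_{21}(\infty)=I$. Hence if $Q\in\mathcal{S}(\RHinf,m,k)$ then $I-J_{22}Q$ is block-lower-triangular with diagonal blocks nonsingular at infinity (well-posedness), so its inverse is again in $\mathcal{S}(\Rp)$, and Remark~\ref{rem:algebra_property} yields $F_l(J,Q)\in\mathcal{S}(\Rp,m,k)$; this is one inclusion of~(i).

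For the converse inclusion, let $K\in\mathcal{S}(\Rp,m,k)$ stabilize $G$; the unstructured parameterization gives a unique $Q\in\RHinf$ with $K=F_l(J,Q)$. Because $J_{12},J_{21}$ are square and biproper with unit feedthrough, their inverses lie in $\mathcal{S}(\Rp)$, and inverting the linear-fractional map expresses $Q$ as a rational combination of $K$ and the blocks $J_{ij}$ --- with every intermediate inverse block-lower-triangular and nonsingular at infinity --- so $Q\in\mathcal{S}(\Rp)\cap\RHinf=\mathcal{S}(\RHinf,m,k)$.

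For~(ii), fix such a $Q$. By Lemma~\ref{lem:realize_stable}, $Q$ has a structured, stabilizable and detectable realization $(A_Q,B_Q,C_Q,D_Q)$, and $A_Q$ is Hurwitz. Composing it with the structured realization of $J$ via the standard state-space realization of the linear-fractional interconnection $F_l(J,Q)$ (see~\cite{zdg}) produces a realization $(\mathcal{A},\mathcal{B},\mathcal{C},\mathcal{D})$ of $K$; after grouping the $J$-states and $Q$-states of each subsystem together, its matrices are sums and products of matrices in $\mathcal{S}(\R)$ together with $(I+DD_Q)^{-1}$ (nonsingular by well-posedness, block-lower-triangular with nonsingular diagonal), hence $(\mathcal{A},\mathcal{B},\mathcal{C},\mathcal{D})\in\mathcal{S}(\R)$. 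To see the realization is stabilizable and detectable, form the $G$--$K$ loop with $(A,B,C,D)$ realizing $G$ and $(\mathcal{A},\mathcal{B},\mathcal{C},\mathcal{D})$ realizing $K$: its closed-loop matrix is the $\bar A$ of Proposition~\ref{prop:stab_ss}, now carrying states $(x,x_J,x_Q)$, and the usual separation computation for the Youla parameterization --- valid because $A+BF_d$, $A+L_dC$, and $A_Q$ are all Hurwitz --- brings it to block-triangular form with these three Hurwitz blocks on the diagonal. Thus $\bar A$ is Hurwitz, and the PBH argument from the proof of Proposition~\ref{prop:stab_ss} shows $(\mathcal{C},\mathcal{A},\mathcal{B})$ is stabilizable and detectable. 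Therefore $K$ is $\mathcal{S}$-realizable.

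The sparsity bookkeeping --- closure of $\mathcal{S}(\Rp)$ under products and under inversion of block-lower-triangular transfer matrices with nonsingular diagonal --- is routine. The step I expect to be the main obstacle is the stabilizability/detectability claim in~(ii): it is a statement about the specific composed realization of $K$, not about $K$ as a transfer function, and proving it requires exhibiting the block-triangular structure of the closed-loop matrix and invoking that $A+BF_d$ and $A+L_dC$ remain Hurwitz by virtue of $A$ being block-lower-triangular.
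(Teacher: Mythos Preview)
Your proposal is correct and follows essentially the same route as the paper: reduce to the classical Youla parameterization built from the structured realization and the block-diagonal gains $F_d,L_d$, track sparsity through Remark~\ref{rem:algebra_property}, realize $Q$ via Lemma~\ref{lem:realize_stable}, and compose. The only notable difference is in part~(ii): the paper verifies stabilizability of the composed realization directly by exhibiting an explicit state feedback (namely $[\,C+DF_d\;\; DC_Q\,]$) that renders $\hat A$ Hurwitz, whereas you instead form the full $G$--$K$ closed loop, invoke the standard separation structure to see $\bar A$ is Hurwitz, and then apply the PBH argument from Proposition~\ref{prop:stab_ss} to deduce stabilizability and detectability of the controller realization; both arguments are valid and of comparable length.
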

\begin{proof}
By the proof of Theorem~\ref{thm:conditions}, $A+BF_d$ and $A+L_dC$ are Hurwitz. The classical parameterization of all stabilizing controllers~\cite{zdg} is given by $F_l(J,Q)$ where $J$ is defined in the theorem statement above, and $Q\in \RHinf$ such that $I+DQ(\infty)$ is nonsingular. It remains to show that $K\in \mathcal{S}$ if and only if $Q\in \mathcal{S}$. This is a straightforward consequence of Remark~\ref{rem:algebra_property}, and the fact that all subblocks of $J$ are in $\mathcal{S}$.

We will now show that every stabilizing controller is $\mathcal{S}$-realizable. Note that $Q\in\mathcal{S}(\RHinf,m,k)$, so we may apply Lemma~\ref{lem:realize_stable}. Let $(A_Q,B_Q,C_Q,D_Q)$ be a structured realization for $Q$ that is stabilizable and detectable. The associated controller has a realization 
\[
F_l(J,Q)=\left[\begin{array}{cc|c}
\hat A_{11} & \hat A_{12} & \hat B_1 \\
\hat A_{21} & \hat A_{22} & \hat B_2 \\ \hlinet\rule{0pt}{2.5ex}
\hat C_1 & \hat C_2 & \hat D \end{array}\right]
\]
which can be computed using the Redheffer Star-Product \cite[\S 10]{zdg}. For example, the $\hat A$ and $\hat B$ terms are:
\begin{align*}
\hat A_{11} &= A+BF_d+L_dC+L_dDF_d -(B+L_dD)(I+D_QD)^{-1} (C+DF_d) \\
\hat A_{12} &= (B+L_dD)(I+D_QD)^{-1} C_Q \\
\hat A_{21} &= -B_Q(I+DD_Q)^{-1}(C+DF_d) \\
\hat A_{22} &= A_Q -B_Q(I+DD_Q)^{-1}DC_Q \\
\hat B_1 &= -L_d +(B+L_dD)(I+D_QD)^{-1}D_Q \\
\hat B_2 &= B_Q(I+DD_Q)^{-1}
\end{align*}
Using Remark~\ref{rem:algebra_property} and the fact that the realization for $Q$ is structured, it is straightforward to show that the $\hat A_{ij}$ and $\hat B_i$ are each structured according to $\mathcal{S}$. The same is true of the $\hat C_i$ and $\hat D$. Stabilizability can be verified explicitly. For example, one can check that
\[
\bmat{\hat A_{11} & \hat A_{12} \\ \hat A_{21} & \hat A_{22}} +
\bmat{\hat B_1 \\ \hat B_2} \bmat{C+DF_d & DC_Q}
= \bmat{ A+BF_d & BC_Q \\ 0 & A_Q }
\]
which is clearly Hurwitz. A similar argument applies for detectability. Finally, we can permute the states as we did in the proof of Theorem~\ref{thm:param} to obtain a realization for $F_l(J,Q)$ for which all matrices are in $\mathcal{S}$. This shows that the controller is $\mathcal{S}$-realizable, as required.
\end{proof}

\subsection{Stabilizable Systems are Realizable}\label{sec:stabilizable}

In this subsection, we assume the plant is $\mathcal{S}$-stabilizable, but make no assumptions regarding $\mathcal{S}$-realizability. In this case, we show that one can always construct an $\mathcal{S}$-realizable controller.

\begin{thm}\label{thm:diagctrl}
Suppose $G \in \mathcal{S}(\Rp,k,m)$ is stabilized by some $K \in \mathcal{S}(\Rp,m,k)$. Then,
\begin{enumerate}[(i)]
\item $G$ is stabilized by $K_d = \diag(K_{11},\dots,K_{N\!N})$.
\item $K_d$ is $\mathcal{S}$-realizable.
\end{enumerate}
\end{thm}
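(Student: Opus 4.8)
The plan is to prove (i) first, since (ii) follows almost immediately from it. The key structural observation is that a plant $G \in \mathcal{S}(\Rp,k,m)$ is block-lower-triangular, so its diagonal blocks $G_{ii}$ are decoupled from one another in the following sense: the feedback interconnection of $G$ with any block-lower-triangular controller $K \in \mathcal{S}(\Rp,m,k)$ has a closed-loop map whose stability can be read off from the diagonal. More precisely, I would form the closed-loop matrix $\bmat{I & -G \\ -K & I}^{-1}$ and argue that, after an appropriate permutation of rows and columns (grouping the $i$-th sub-blocks together, exactly the permutation trick used in the proof of Theorem~\ref{thm:conditions}), it becomes block-lower-triangular with diagonal blocks equal to $\bmat{I & -G_{ii} \\ -K_{ii} & I}^{-1}$. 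Hence $K$ stabilizes $G$ if and only if each $K_{ii}$ stabilizes $G_{ii}$ in the sense of Definition~\ref{def:stab_io}. Since the well-posedness condition $I - G(\infty)K(\infty)$ invertible also triangularizes the same way (its determinant is the product of the $\det(I - G_{ii}(\infty)K_{ii}(\infty))$), the same equivalence holds at the level of well-posedness.

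Granting that equivalence, part (i) is immediate: if $K \in \mathcal{S}(\Rp,m,k)$ stabilizes $G$, then each $K_{ii}$ stabilizes $G_{ii}$; but then the block-diagonal controller $K_d = \diag(K_{11},\dots,K_{NN})$, whose diagonal blocks are exactly the $K_{ii}$, also stabilizes $G$ by the same criterion --- the off-diagonal blocks of $K$ play no role in closed-loop stability for a lower-triangular plant. I would carry this out by working directly with the transfer-function definition (Definition~\ref{def:stab_io}) rather than state space, to avoid having to produce realizations prematurely; the permutation-triangularization argument is cleanest there.

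For part (ii), note that $K_d$ is block-\emph{diagonal}, so it trivially lies in $\mathcal{S}(\Rp,m,k)$, and moreover its diagonal-only structure is even more special. To get an $\mathcal{S}$-realization, take a minimal realization $(A_{Kii}, B_{Kii}, C_{Kii}, D_{Kii})$ of each $K_{ii}$ and assemble them block-diagonally; the resulting $(A_K, B_K, C_K, D_K)$ is block-diagonal, hence certainly in $\mathcal{S}(\R,\cdot,\cdot)$ for each matrix. This realization is stabilizable and detectable because each block is minimal and the blocks are decoupled (apply the PBH test block by block, just as in the proof of Proposition~\ref{prop:stab_ss}). Therefore $K_d$ is $\mathcal{S}$-realizable.

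I expect the main obstacle to be the bookkeeping in the triangularization step for part (i): one must verify carefully that the permutation that sorts the composite index set $(k,m)$ into "all first-blocks, then all second-blocks, \dots" simultaneously lower-triangularizes the closed-loop operator and that the induced diagonal blocks are precisely the small $2\times 2$ closed-loop operators $\bmat{I & -G_{ii} \\ -K_{ii} & I}^{-1}$. This requires tracking how block-lower-triangularity of both $G$ and $K$ (with compatible index sets, as guaranteed by Assumption~\textit{A}2 and Remark~\ref{rem:algebra_property}) is preserved under inversion --- the set of invertible block-lower-triangular matrices over $\Rp$ with this index structure forms a group, with the inverse having diagonal blocks equal to the inverses of the original diagonal blocks. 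Once that algebraic fact is in hand, both the well-posedness condition and the $\RHinf$ membership condition decouple across the diagonal, and the rest is routine.
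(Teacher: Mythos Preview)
Your proposal has a genuine gap in part~(i). You claim that after the permutation, closed-loop stability ``decouples across the diagonal,'' i.e., that $K$ stabilizes $G$ if and only if each $K_{ii}$ stabilizes $G_{ii}$. The forward implication is fine, but the reverse implication---which is exactly what you need to pass from ``each $K_{ii}$ stabilizes $G_{ii}$'' to ``$K_d$ stabilizes $G$''---is false. Take $\mathcal{S}=\bmat{1&0\\1&1}$, $G=\bmat{0&0\\ \tfrac{1}{s-1}&0}$, and $K=K_d=0$. Then $G_{11}=G_{22}=0$ are each stabilized by $K_{11}=K_{22}=0$, yet $\bmat{I&-G\\0&I}^{-1}=\bmat{I&G\\0&I}\notin\RHinf$, so $K_d$ does not stabilize $G$. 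The issue is that for a block-lower-triangular \emph{transfer matrix}, membership in $\RHinf$ requires every block to be stable, not just the diagonal ones; the off-diagonal blocks of $\bmat{I&-G\\-K_d&I}^{-1}$ contain products of the form $(\text{stable})\cdot G_{ij}\cdot(\text{stable})$ with $G_{ij}$ possibly unstable, and nothing in your argument controls them. Your intermediate step therefore discards information from the hypothesis ``$K$ stabilizes $G$'' that is essential to the conclusion.

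The paper's proof avoids this by moving to state space. It embeds $G$ and $K$ in the full lower-triangular pattern $\bar{\mathcal{S}}$, applies Lemma~\ref{lem:realize_triangular} to obtain stabilizable and detectable $\bar{\mathcal{S}}$-realizations of both, and then invokes Proposition~\ref{prop:stab_ss}: stability is now encoded in the \emph{constant} matrix $\bar A$, and after the same permutation $\bar A$ is block-lower-triangular. For a constant block-triangular matrix the Hurwitz property genuinely is determined by the diagonal blocks (eigenvalues are the union of diagonal-block eigenvalues). Replacing $K$ by $K_d$---realized block-diagonally from the diagonal pieces of $K$'s triangular realization---leaves those diagonal blocks of $\bar A$ unchanged, so the new $\bar A$ is again Hurwitz, and Proposition~\ref{prop:stab_ss} gives both that $K_d$ stabilizes $G$ and that the block-diagonal realization of $K_d$ is stabilizable and detectable. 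The passage through Lemma~\ref{lem:realize_triangular} is not a convenience but the crux: it converts the question into one about spectra of constant triangular matrices, where the decoupling you want actually holds.
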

\begin{proof}
The proof is similar to the proof of Theorem~\ref{thm:conditions}. $G$ and $K$ may not be $\mathcal{S}$-realizable, so we embed their sparsity patterns in $\bar{\mathcal{S}} \supset \mathcal{S}$, the set of block-lower-triangular transfer matrices. Apply Lemma~\ref{lem:realize_triangular} to find triangular realizations, and the result follows from  Proposition~\ref{prop:stab_ss}.
\end{proof}

Theorem~\ref{thm:diagctrl} provides a simple test for $\mathcal{S}$-stabilizability; stabilizing each subsystem individually must also globally stabilize the system. Otherwise, no $\mathcal{S}$-stabilizing controller exists.

\begin{cor}\label{cor:diagtest}
Suppose $G \in \mathcal{S}(\Rp,k,m)$. For $i=1,\dots,N$, let $K_{i}\in \Rp^{m_i\times k_i}$  be any controller that stabilizes $G_{ii}$. Define $K_d = \diag(K_{1},\dots,K_{N})$. Then $G$ is $\mathcal{S}$-stabilizable if and only if $K_d$ stabilizes $G$.
\end{cor}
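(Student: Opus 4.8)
The plan is to deduce Corollary~\ref{cor:diagtest} directly from Theorem~\ref{thm:diagctrl}, which already does all the real work. The corollary is essentially a restatement of Theorem~\ref{thm:diagctrl} that replaces the specific diagonal blocks $K_{ii}$ of a presumed structured stabilizing controller with \emph{arbitrary} stabilizing controllers $K_i$ for the diagonal plants $G_{ii}$. So the heart of the argument is showing that the choice of which stabilizing $K_i$ one picks for each $G_{ii}$ does not matter: either every such diagonal assembly $K_d = \diag(K_1,\dots,K_N)$ stabilizes $G$, or none does.

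First I would prove the ``if'' direction, which is immediate: if $K_d$ stabilizes $G$, then since $K_d \in \mathcal{S}(\Rp,m,k)$ (a block-diagonal matrix trivially has the sparsity of any $\mathcal{S}$ with full diagonal), $G$ is $\mathcal{S}$-stabilizable by definition. Next, for the ``only if'' direction, suppose $G$ is $\mathcal{S}$-stabilizable, so there exists some $K\in\mathcal{S}(\Rp,m,k)$ stabilizing $G$. By Theorem~\ref{thm:diagctrl}(i), the specific diagonal controller $\diag(K_{11},\dots,K_{NN})$ built from the diagonal blocks of that $K$ also stabilizes $G$. In particular, each $K_{ii}$ stabilizes $G_{ii}$: this follows from Proposition~\ref{prop:stab_ss} applied block-by-block, or simply from the fact that a block-diagonal feedback interconnection is internally stable iff each diagonal sub-interconnection is (the closed-loop map decouples into $N$ independent maps $(u_1^{(i)},u_2^{(i)})\mapsto(y_1^{(i)},y_2^{(i)})$, one per node).

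The remaining — and only mildly nontrivial — step is the invariance claim: if $\diag(K_{11},\dots,K_{NN})$ stabilizes $G$ with each $K_{ii}$ stabilizing $G_{ii}$, then $K_d = \diag(K_1,\dots,K_N)$ stabilizes $G$ for \emph{any} other choice of stabilizing $K_i$. Here I would again invoke Theorem~\ref{thm:diagctrl}, but read in the reverse direction: take $\bar{\mathcal{S}}$ to be the full block-lower-triangular pattern, note $G, K_d \in \bar{\mathcal{S}}(\Rp,\cdot,\cdot)$, apply Lemma~\ref{lem:realize_triangular} to get triangular realizations $(A,B,C,D)$ for $G$ and $(A_K,B_K,C_K,D_K)$ for $K_d$, and apply Proposition~\ref{prop:stab_ss}. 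After the row/column permutation used in the proof of Theorem~\ref{thm:conditions}, the closed-loop $\bar A$ becomes block-lower-triangular with diagonal blocks depending only on $(A_{ii},B_{ii},C_{ii},D_{ii})$ and $(A_{Kii},B_{Kii},C_{Kii},D_{Kii})$; since $A_{Kii}$ is a realization of $K_i$ which stabilizes $G_{ii}$, each diagonal block is Hurwitz, hence $\bar A$ is Hurwitz, hence $K_d$ stabilizes $G$. The one point requiring a little care is that the triangular realization of the block-diagonal $K_d$ need not itself be block-diagonal, so one must confirm that the relevant diagonal sub-blocks of the permuted $\bar A$ genuinely see only $K_i$; this is exactly the computation already carried out in the proof of Theorem~\ref{thm:conditions}, so I would cite that rather than redo it.

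I expect the main obstacle — such as it is — to be purely expository: making precise why ``stabilizes $G_{ii}$'' is the right per-node condition and why it is insensitive to the particular stabilizing controller chosen, without re-deriving the permutation argument of Theorem~\ref{thm:conditions}. No genuinely new idea is needed; the corollary is a packaging of Theorem~\ref{thm:diagctrl} plus the elementary observation that internal stability of a block-diagonal interconnection decouples across nodes.
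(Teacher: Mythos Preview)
Your proposal is correct and follows the same route the paper takes, just with the steps spelled out. The paper's own proof is two lines: the ``if'' direction is immediate since $K_d\in\mathcal{S}$, and the ``only if'' direction is dispatched with ``the converse follows from Theorem~\ref{thm:diagctrl}.'' What you have done is unpack what that deferral actually means, and you have identified the one nontrivial point correctly: Theorem~\ref{thm:diagctrl} literally gives only the \emph{specific} diagonal $\diag(K_{11},\dots,K_{NN})$, whereas the corollary claims \emph{any} diagonal built from per-node stabilizers works. Your invariance argument (embed in $\bar{\mathcal{S}}$, triangularize via Lemma~\ref{lem:realize_triangular}, permute, observe the closed-loop $\bar A$ is block-lower-triangular with diagonal blocks depending only on $(G_{ii},K_i)$) is exactly the mechanism behind the proof of Theorem~\ref{thm:diagctrl} itself, so the paper is implicitly relying on that proof yielding the biconditional ``$K_d$ stabilizes $G$ iff each $K_i$ stabilizes $G_{ii}$,'' not just the one direction stated. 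One minor simplification available to you: the transfer-function version of this biconditional (permute $\left[\begin{smallmatrix}I & -G\\ -K_d & I\end{smallmatrix}\right]$ node-by-node to get a block-lower-triangular matrix over $\Rp$ with diagonal blocks $\left[\begin{smallmatrix}I & -G_{ii}\\ -K_i & I\end{smallmatrix}\right]$) already gives both your extraction step and your invariance step in one stroke, so you need not pass through state space at all for this corollary.
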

\begin{proof}
If $K_d$ stabilizes $G$, then clearly $G$ is $\mathcal{S}$-stabilizable, since $K_d\in\mathcal{S}$. The converse follows from Theorem~\ref{thm:diagctrl}.
\end{proof}

The main result of this paper is that stabilizable plants are always $\mathcal{S}$-realizable. We state it below as a Corollary.

\begin{cor}[Main Result]\label{cor:main}
Suppose $G\in\mathcal{S}(\Rp,k,m)$ is $\mathcal{S}$-stabilizable. Then $G$ is $\mathcal{S}$-realizable, and every $K\in\mathcal{S}(\Rp,m,k)$ that stabilizes $G$ is $\mathcal{S}$-realizable.
\end{cor}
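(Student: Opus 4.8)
The statement to prove is Corollary~\ref{cor:main}, which asserts two things about an $\mathcal{S}$-stabilizable plant $G$: (a) $G$ is $\mathcal{S}$-realizable, and (b) every $K \in \mathcal{S}(\Rp,m,k)$ that stabilizes $G$ is $\mathcal{S}$-realizable. The plan is to extract both conclusions from the machinery already assembled, chaining Theorem~\ref{thm:diagctrl}, Theorem~\ref{thm:conditions}, and Theorem~\ref{thm:param}.

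\textbf{Step 1: produce an $\mathcal{S}$-realizable stabilizing controller.} Since $G$ is $\mathcal{S}$-stabilizable, there is some $K \in \mathcal{S}(\Rp,m,k)$ stabilizing $G$. By Theorem~\ref{thm:diagctrl}(i), the diagonal controller $K_d = \diag(K_{11},\dots,K_{NN})$ also stabilizes $G$, and by Theorem~\ref{thm:diagctrl}(ii), $K_d$ is $\mathcal{S}$-realizable. So we have in hand a \emph{particular} structured stabilizing controller with a structured realization.

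\textbf{Step 2: bootstrap to $\mathcal{S}$-realizability of the plant.} This is the crux. By the symmetry of stabilization noted just after the definition of $\mathcal{S}$-stabilizability ($K$ stabilizes $G$ iff $G$ stabilizes $K$), the plant $G$ stabilizes $K_d$. Since $K_d$ is $\mathcal{S}$-realizable (viewing $K_d$ now as the ``plant'' over the same graph $\mathcal{S}$, with transposed index sets), Theorem~\ref{thm:conditions}(i) applies to the pair $(K_d, G)$: because $G$ is a structured stabilizing controller for the $\mathcal{S}$-realizable system $K_d$, the diagonal subsystems $(A_{ii},B_{ii},C_{ii})$ of a suitable realization arising from this argument must be stabilizable and detectable. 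Concretely, I would take a minimal realization of $G$, embed it in the full block-lower-triangular structure $\bar{\mathcal{S}}$ via Lemma~\ref{lem:realize_triangular} to get a triangular realization $(A,B,C,D)$, and then run the converse direction of the Theorem~\ref{thm:conditions} argument with the roles of plant and controller swapped — using that $G$ stabilizes the $\mathcal{S}$-realizable $K_d$ and applying Proposition~\ref{prop:stab_ss} blockwise after the permutation of $\bar A$ — to conclude each diagonal block $(C_{ii},A_{ii},B_{ii})$ of the triangular realization of $G$ is stabilizable and detectable. Feeding this back into the forward direction of Theorem~\ref{thm:conditions}, or directly invoking it, gives that the triangular realization of $G$ can be replaced by one structured according to $\mathcal{S}$ itself; equivalently, $G$ is $\mathcal{S}$-realizable. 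The main obstacle is making this role-swap rigorous: one must check that the permutation-of-$\bar A$ trick and the blockwise application of Proposition~\ref{prop:stab_ss} genuinely transfer stabilizability/detectability of the closed loop back onto the \emph{plant's} diagonal blocks, not merely the controller's, and that the resulting diagonal-block conditions are exactly the hypothesis needed by Theorem~\ref{thm:conditions} to declare $G$ itself $\mathcal{S}$-realizable.

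\textbf{Step 3: every stabilizing controller is $\mathcal{S}$-realizable.} Once $G$ is known to be $\mathcal{S}$-realizable, fix the structured realization $(A,B,C,D)$ of $G$. Any $K \in \mathcal{S}(\Rp,m,k)$ that stabilizes $G$ falls under the parameterization of Theorem~\ref{thm:param}(i): $K = F_l(J,Q)$ for some $Q \in \mathcal{S}(\RHinf,m,k)$ with $I + DQ(\infty)$ nonsingular. Then Theorem~\ref{thm:param}(ii) states precisely that every such $K$ is $\mathcal{S}$-realizable. This completes the proof, so the corollary follows by combining Steps 1–2 (for the realizability of $G$) with Step 3 (for every stabilizing controller).
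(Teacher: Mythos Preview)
Your Steps~1 and~3 are fine and match the paper. The problem is Step~2: the route through Theorem~\ref{thm:conditions} does not close. Theorem~\ref{thm:conditions} has $\mathcal{S}$-realizability of the plant as a \emph{hypothesis}, not as a conclusion; it tells you, for an already $\mathcal{S}$-realized plant, that $\mathcal{S}$-stabilizability is equivalent to stabilizability/detectability of the diagonal blocks. Even if you succeed in showing that the diagonal blocks of a $\bar{\mathcal{S}}$-realization of $G$ (obtained via Lemma~\ref{lem:realize_triangular}) are stabilizable and detectable, nothing in Theorem~\ref{thm:conditions} lets you ``replace the triangular realization by one structured according to $\mathcal{S}$ itself.'' A $\bar{\mathcal{S}}$-realization may have nonzero blocks $A_{ij}$, $B_{ij}$, $C_{ij}$ in positions where $\mathcal{S}_{ij}=0$, and the diagonal-block condition says nothing about those off-diagonal entries. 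So the inference at the end of Step~2 is unsupported.

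The paper's fix is much shorter and uses machinery you already invoked elsewhere: after Step~1 you have an $\mathcal{S}$-realizable $K_d$ stabilized by $G$. Now apply Theorem~\ref{thm:param}(ii) with $K_d$ playing the role of the plant. That theorem says that for an $\mathcal{S}$-realizable plant, \emph{every} structured stabilizing controller is $\mathcal{S}$-realizable; since $G\in\mathcal{S}(\Rp,k,m)$ stabilizes $K_d$, it follows immediately that $G$ is $\mathcal{S}$-realizable. Then your Step~3 applies verbatim. In other words, the role-swap idea is right, but the result to invoke after swapping is Theorem~\ref{thm:param}(ii), not Theorem~\ref{thm:conditions}.
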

\begin{proof}
By Theorem~\ref{thm:diagctrl}, every $\mathcal{S}$-stabilizable plant $G$ can be stabilized by an $\mathcal{S}$-realizable controller $K_d$. Interchanging the roles of the plant and controller, $K_d$ is an $\mathcal{S}$-realizable plant stabilized by $G$. By Theorem~\ref{thm:param}, $G$ must be $\mathcal{S}$-realizable. Applying Theorem~\ref{thm:param} once more to $G$, we conclude that all stabilizing controllers must be $\mathcal{S}$-realizable.
\end{proof}

Note that the converse of Corollary~\ref{cor:main} is generally false. One can construct an $\mathcal{S}$-realizable plant that is not $\mathcal{S}$-stabilizable. Consider for example:
\[
G = \bmat{\frac{1}{s+1} & 0 \\ \T \frac{1}{s-1} &
  \frac{1}{s+1}} = \left[\begin{array}{ccc|cc}
    -1 & 0 & 0 & 1 & 0 \\
    0 & 1 & 0 & 1 & 0 \\
    0 & 0 & -1 & 0 & 1 \\ \hlinet
    1 & 0 & 0 & 0 & 0 \\
    0 & 1 & 1 & 0 & 0
\end{array}\right]
\]
If we realize $G$ by clustering the unstable mode with $A_{11}$, i.e. using the index set $n=(2,1)$, it leads to an undetectable $(C_{11},A_{11})$. If we instead associate the unstable mode with $A_{22}$ by using $n=(1,2)$, it will lead to an unstabilizable $(A_{22},B_{22})$. By Theorem~\ref{thm:conditions}, this plant is not stabilizable by a lower-triangular controller. Of course, non-structured stabilizing controllers exist since the realization for $G$ given above is minimal.

\section{Concluding Remarks}\label{sec:discussion} 

The main contribution of this paper is to establish the connection between the structured versions of stabilizability and realizability. In the context of structured controller synthesis, it is a basic requirement that the plant is $\mathcal{S}$-stabilizable and that the controller is $\mathcal{S}$-stabilizing. By Corollary~\ref{cor:main}, it is necessary that the plant and controller are both $\mathcal{S}$-realizable. The counterexample~\eqref{eq:ex1} causes no problems because as a plant, $G_2$ would not be $\mathcal{S}$-stabilizable, and as a controller, $G_2$ could not stabilize any structured plant.

As mentioned in Section~\ref{sec:literature}, the work of Vamsi and Elia~\cite{elia_acc,elia_cdc} also addresses the issue of structured realizability. The underlying assumptions are different however. The authors assume discrete-time subsystems with one-timestep delays along directed edges. In the work herein, we assumed a delay-free network of continuous-time systems. Some results, such as the counterexample \eqref{eq:ex1}, are valid in either framework. However, the other results in this paper do not immediately translate to discrete-time.

In this paper, we adopted the classical definition of internal stability. This notion of stabilization is the weakest one possible, since it does not guarantee that the signals that travel between subsystems or between subcontrollers remains bounded. It follows that Corollary~\ref{cor:main} will remain true even for stricter notions of stabilization. The work of Yadav et.al.~\cite{voulgaris_structure} explores other types of stability in great detail.

While $\mathcal{S}$-stabilizability guarantees $\mathcal{S}$-realizability, the question of how to explicitly construct a structured realization is still open. Lemmas~\ref{lem:realize_stable} and \ref{lem:realize_triangular} provide constructions for the special cases of stable and triangular systems respectively, but no universal method for constructing structured realizations is known.

\bibliographystyle{abbrv}
\bibliography{realizability}

\begin{thebibliography}{10}

\bibitem{dulldand_hetero}
G.~E. Dullerud and R.~D'Andrea.
\newblock Distributed control of heterogeneous systems.
\newblock {\em IEEE Transactions on Automatic Control}, 49(12):2113--2128,
  2004.

\bibitem{dullerud}
G.~E. Dullerud and F.~Paganini.
\newblock {\em A Course in Robust Control Theory: A Convex Approach}.
\newblock Springer Verlag, 2000.

\bibitem{jonghan_separable}
J.-H. Kim and S.~Lall.
\newblock Separable optimal cooperative control problems.
\newblock In {\em American Control Conference}, 2012.
\newblock (To Appear).

\bibitem{lessard_bcof}
L.~Lessard.
\newblock Decentralized lqg control of systems with a broadcast architecture.
\newblock In {\em IEEE Conference on Decision and Control}, 2012.
\newblock (To Appear).

\bibitem{lessard_iqi_09}
L.~Lessard and S.~Lall.
\newblock Reduction of decentralized control problems to tractable
  representations.
\newblock In {\em IEEE Conference on Decision and Control}, volume~1, pages
  1621--1626, 2009.

\bibitem{lessard_allerton}
L.~Lessard and S.~Lall.
\newblock A state-space solution to the two-player decentralized optimal
  control problem.
\newblock In {\em Allerton Conference on Communication, Control, and
  Computing}, pages 1559--1564, 2011.

\bibitem{qimurti04}
X.~Qi, M.~Salapaka, P.~Voulgaris, and M.~Khammash.
\newblock Structured optimal and robust control with multiple criteria: {A}
  convex solution.
\newblock {\em IEEE Transactions on Automatic Control}, 49(10):1623--1640,
  2004.

\bibitem{rantzer06}
A.~Rantzer.
\newblock Linear quadratic team theory revisited.
\newblock In {\em American Control Conference}, pages 1637--1641, 2006.

\bibitem{rotkowitz2010convexity}
M.~Rotkowitz, R.~Cogill, and S.~Lall.
\newblock Convexity of optimal control over networks with delays and arbitrary
  topology.
\newblock {\em International Journal of Systems, Control and Communications},
  2(1):30--54, 2010.

\bibitem{rotkowitz_2006a}
M.~Rotkowitz and S.~Lall.
\newblock Convexification of optimal decentralized control without a
  stabilizing controller.
\newblock In {\em Proceedings of the International Symposium on Mathematical
  Theory of Networks and Systems (MTNS)}, pages 1496--1499, 2006.

\bibitem{sabau10}
S.~Sab\u{a}u and N.~Martins.
\newblock Stabilizability and norm-optimal control design subject to sparsity
  constraints.
\newblock Sept. 2012.
\newblock eprint arXiv:1209.1123.

\bibitem{scherer02}
C.~W. Scherer.
\newblock Structured finite-dimensional controller design by convex
  optimization.
\newblock {\em Linear Algebra and its Applications}, 351-352:639--669, 2002.

\bibitem{shah10}
P.~Shah and P.~A. Parrilo.
\newblock $\mathcal{H}_2$-optimal decentralized control over posets: A state
  space solution for state-feedback.
\newblock In {\em IEEE Conference on Decision and Control}, pages 6722--6727,
  2010.

\bibitem{swigart_thesis}
J.~Swigart.
\newblock {\em Optimal Controller Synthesis for Decentralized Systems}.
\newblock PhD thesis, Stanford University, 2010.

\bibitem{swigart_graph}
J.~Swigart and S.~Lall.
\newblock A graph-theoretic approach to distributed control over networks.
\newblock In {\em IEEE Conference on Decision and Control}, pages 5409--5414,
  2009.

\bibitem{elia_cdc}
A.~S.~M. Vamsi and N.~Elia.
\newblock Design of distributed controllers realizable over arbitrary directed
  networks.
\newblock In {\em IEEE Conference on Decision and Control}, pages 4795--4800,
  2010.

\bibitem{elia_acc}
A.~S.~M. Vamsi and N.~Elia.
\newblock Optimal realizable networked controllers for networked systems.
\newblock In {\em American Control Conference}, pages 336--341, 2011.

\bibitem{voulgaris_structure}
V.~Yadav, M.~Salapaka, and P.~Voulgaris.
\newblock Architectures for distributed controller with sub-controller
  communication uncertainty.
\newblock {\em IEEE Transactions on Automatic Control}, 55(8):1765--1780, aug.
  2010.

\bibitem{youla}
D.~Youla, H.~Jabr, and J.~Bongiorno, J.
\newblock Modern wiener-hopf design of optimal controllers--{Part II}: The
  multivariable case.
\newblock {\em IEEE Transactions on Automatic Control}, 21(3):319--338, 1976.

\bibitem{zelazo09}
D.~Zelazo and M.~Mesbahi.
\newblock $\mathcal{H}_2$ analysis and synthesis of networked dynamic systems.
\newblock In {\em American Control Conference}, pages 2966--2971, 2009.

\bibitem{zdg}
K.~Zhou, J.~Doyle, and K.~Glover.
\newblock {\em Robust and optimal control}.
\newblock Prentice-Hall, 1995.

\end{thebibliography}

\appendix

\section{Proof of Counterexample}\label{sec:appendix}

In this section, we provide a proof that the example \eqref{eq:ex1} of Figure~\ref{fig:ex1} is not $\mathcal{S}_2$-realizable.

\begin{proof}
Suppose $G_2$ has a stabilizable and detectable realization in $\mathcal{S}_2$. Label the blocks of this realization as
\[
A = \bmat{A_{11} &   0    &   0    &   0 \\
            0    & A_{22} &   0    &   0 \\
          A_{31} & A_{32} & A_{33} &   0 \\
          A_{41} & A_{42} &   0    & A_{44} }.
\]
and similarly for $B$ and $C$. Expand $C(sI-A)^{-1}B$ and equate with $G_2$. The four equations corresponding to each of the nonzero entries $G_{ij}$ are
\begin{equation}\label{eq:counter}
\bmat{C_{ij} & C_{ii}}
\left( sI - \bmat{A_{jj} & 0 \\ A_{ij} & A_{ii}}\right)^{-1}
\bmat{B_{jj} \\ B_{ij}}
= \frac{1}{s-1}
\end{equation}
for $(i,j) \in \big\{ (3,1), (3,2), (4,1), (4,2)\big\}$.
A minimal realization for $G_2$ is given by
\[
G_2 = 
\left[\begin{array}{c|cccc}
1 & 1 & 1 & 0 & 0 \\ \hlinet
0 & 0 & 0 & 0 & 0 \\
0 & 0 & 0 & 0 & 0 \\
1 & 0 & 0 & 0 & 0 \\
1 & 0 & 0 & 0 & 0 \end{array}\right]
\]
Since we have assumed our realization to be stabilizable and detectable, the eigenvalues of $A$ must consist of the unstable eigenvalue $1$, together with some number of stable eigenvalues. Since $A$ is block-lower-triangular, the  eigenvalue~$1$ must appear as an eigenvalue of exactly one of the $A_{ii}$ for $i=1,\dots,4$. For $(i,j)=(3,1)$, we deduce from~\eqref{eq:counter} that the eigenvalue~1 is contained in either $A_{11}$ or $A_{33}$, but for $(i,j)=(4,2)$, we deduce that it is contained in either $A_{22}$ or $A_{44}$. This contradiction implies that no structured stabilizable and detectable realization of $G_2$ can exist.
\end{proof}


\end{document}